\documentclass[12pt]{amsart}
\usepackage{graphicx}
\usepackage{amsaddr}
\usepackage{hyperref}
\usepackage{amsmath,amssymb}
\textwidth=6in
\textheight=9.5in
\headheight.2cm
\topmargin-0.5in
\evensidemargin0.25in
\oddsidemargin0.25in
\large\normalsize

\hbadness3000
\vbadness30000
\parindent=0.3in
\parskip=3pt plus 1pt minus 1pt

\theoremstyle{plain}
\newtheorem{theorem}{Theorem}[section]
\newtheorem{lemma}[theorem]{Lemma}

\theoremstyle{definition}

\numberwithin{equation}{section}
\setcounter{page}{1}

\numberwithin{equation}{section}

\newcommand{\F}{\mathbb{F}}
\newcommand{\C}{\mathbb{C}}
\newcommand{\R}{\mathbb{R}}
\newcommand{\N}{\mathbb{N}}
\newcommand{\Z}{\mathbb{Z}}

\newtheorem{prop}[theorem]{Proposition}
\newtheorem{defi}[theorem]{Definition}

\newcommand{\ip}[2]
{\ensuremath{\langle #1,#2 \rangle}}

\usepackage{amsmath}

\begin{document}

	\title{Titchmarsh-Weyl theory for vector-valued discrete Schr\"odinger operators}
	\author{Keshav Raj Acharya}
	\address{Department of Mathematics, Embry--Riddle Aeronautical University\\ Daytona Beach, FL 32114-3900, U.S.A.\\
		acharyak@erau.edu}

	\maketitle
	\thispagestyle{empty}
	
	{\footnotesize
		\noindent{\bf Abstract:}		We develop the Titchmarsh-Weyl theory for vector-valued discrete Schr\"odinger operators and show that the Weyl $m$ functions associated with these operators map complex upper half plane to the Siegel upper half space. We also discuss about the Weyl disk and Weyl circle corresponding to these operators.\\

		\noindent{\bf Key Words}: Discrete Schr\"odinger operator, Titchmarsh-Weyl $m$-function.\\
		
		\noindent{ \bf AMS (MOS) Subject Classification:} 39A70, 47A05, 34B20.

	\maketitle

	\section{Introcuction}

	The goal of this paper is to extend the Titchmarsh-Weyl theory for vector valued discrete Schr\"odinger operators.   We consider a  discrete Schr\"odinger equation in $d-$ dimensional space of the form	
	\begin{align}\label{ds} y(n+1)+y(n-1)+B(n)y(n)=zy(n),\, \, z\in \C \end{align}
	where   $ y(n) =[y_1(n)\,\,y_2(n), \hdots\,\, y_d(n)]^t $ ( $t$ stands for a transpose),  is a vector valued sequence in $l^2(I, \C^d).$  Usually $I=\Z$ or $I=\N$. Here $l^2(I, \C^d)$ is a Hilbert space of   square summable vector valued sequences with the inner product \[\ip{u}{v} = \sum_{n\in I} u(n)^*v(n), \] where  $ `` \ast "$ stands for conjugate transpose  and $ B(n) $ is a symmetric $d\times d$ matrix. We denote the space of all $d\times d$ complex matrices by $\C^{d\times d}$. 	
	The equation (\ref{ds}) can  be generalized to a $d-$dimensional Jacobi equation of the form \begin{align} \label{je}A(n) y(n+1)+A(n-1)y(n-1)+B(n)y(n)=zy(n),\, \, z\in \C \end{align} with $A(n), B(n)$ are sequences of $d\times d$ matrices. If $I = N $ The equation (\ref{je}) can be written in the form: \begin{align*} \begin{pmatrix}	&B(1)&A(1)& 0 &   &\\
	& A(1)& B(2) & A(2) &  \ddots         &          \\
	 & 0 & A(2) & B(3) &  \ddots   &                \\
	  &     &\ddots & \ddots &\ddots &          \\
	\end{pmatrix} \begin{pmatrix} y(1)\\y(2)\\ \vdots \\ \vdots \\ \vdots	\end{pmatrix} = z \begin{pmatrix} y(1)\\y(2)\\ \vdots \\ \vdots \\ \vdots	\end{pmatrix}.\end{align*}	
	
	The matrix \begin{align*} J= \begin{pmatrix}	&B(1)&A(1)& 0 &   &\\
	& A(1)& B(2) & A(2) &  \ddots         &          \\
	& 0 & A(2) & B(3) &  \ddots   &                \\
	&     &\ddots & \ddots &\ddots &          \\
	\end{pmatrix} \end{align*} is called a block Jacobi matrix. Some studies about the block Jacobi matrix can be found in the paper \cite{RK}. Equation (\ref{ds}) is a particular case of Jacobi equation with  $A(n)\equiv 1.$ 
	
  The equation (\ref{ds}) induces a discrete Schr\"odinger operator $J$ on $ l^2(I, \C^d)$ as \[
	\mathop{J} y(n)= y(n+1)+y(n-1)+B(n)y(n).\] It can be easily observed that if $B(n)$  is a Hermitian matrix, $ B(n)^*=B(n),$ then $J$ is a self-adjoint operator on $ l^2(\N, \C^d)$. Then, the spectrum of $J$ is a set of real numbers: $\sigma(J) \subset \R .$ \\ 
	
To get a solution of the equation (\ref{ds}), we may fix any two vectors $c, d \in \C^d$ at two consecutive sites, that is, we  fix the values  $u(k) = c, u(k+1) = d$ and evolve according to (\ref{ds}). In particular, we fix $u(0)$ and $u(1)$ then any $u(n)$ is obtained by solving the difference equation (\ref{ds}) using transfor matrices: \begin{align} T(m;z) = \begin{pmatrix} zI-B(m) & - I \\ I & 0 \end{pmatrix} \end{align}  where $I$ is an $d\times d$ identity matrix. Let  \begin{align} A(n;z) =T(n;z)\times \dots \times T(1,z) \times I  .\end{align}
Then,  $u$ solves (\ref{ds}) for every $n$ if and only if \begin{align} \label{tm} \begin{pmatrix}  u(n+1)\\ u(n)  \end{pmatrix}= A(n;z) \begin{pmatrix}  u(1)\\ u(0)  \end{pmatrix}  \end{align}	

This matrix can also be used to get a solution at cite $n$ from cite $m$ as \[\begin{pmatrix}  u(n+1)\\ u(n)  \end{pmatrix}= A(n, m;z) \begin{pmatrix}  u(m+1)\\ u(m)  \end{pmatrix} . \]
  For every pair of vectors $c, d \in \C^d,$ there exists a solution of (\ref{ds}), therefore, the space of solutions of (\ref{ds}) is a $2d $-dimensional vector space. In \cite{KA}, it is shown that are exactly $d$ linearly independent solutions of (\ref{ds}) that are in $l^2(\N, \C^d).$ 
	
	It is now convenient to fix a basis of the solution space of (\ref{ds}). An easier way   is to prescribe 	a pair of initial conditions. For $z\in \C,$ let  \begin{align}\label{is} U(n,z) = (u_1(n), u_2(n), \hdots, u_d(n)), \hspace{.5in} V(n,z) = (v_1(n), v_2(n), \hdots, v_d(n)) \end{align} where $  u_i(n)= [u_{1,i}(n)\,\,u_{2,i}(n)\,\, \hdots u_{d,i}(n)]^t  \hspace{.5in} v_i(n)= [v_{1,i}(n)\,\,v_{2,i}(n)\,\, \hdots v_{d,i}(n)]^t$ are solutions of (\ref{ds}). Thus, both of the sets $U(n,z)$ and $V(n,z)$ consists of $d$ linearly independent solutions of  $(\tau -z)u(n)= 0$, where $\tau$ is the expression on the left side of equation (\ref{ds}). For our convenience, we call these sets as  matrix valued solutions  of (\ref{ds}).  We further suppose that these solutions satisfy the following initial conditions \begin{equation} \label{ic} U(0,z)= -I, \hspace{.3in} V(0,z)= 0, \hspace{.2in}U(1,z) =0, \hspace{.3in} V(1,z)= I .\end{equation}
	By iterating the difference equation, we see that for fixed $n\in \N, U(n,z),  V(n,z)$ are polynomial of degree $n-2$ over $ \C^{d\times d}.$  So $ \overline{U(n,z)} = U(n,\bar{z})$ and $ \overline{V(n,z)}=V(n,\bar{z}).$
	
	We generalize the equation (\ref{tm}) for the matrix valued solutions $ U(n,z), V(n,z)$ as  \begin{align*}  \begin{pmatrix}  U(n+1,z) & V(n+1,z)\\ U(n,z) & V(n, z) \end{pmatrix}  & = A(n;z) \begin{pmatrix}  U(1, z) & V(1, z)\\ U(0,z) & V(0, z)  \end{pmatrix}\\ &  = A(n;z) \begin{pmatrix}  0 & I \\ -I & 0  \end{pmatrix}\\ & = A(n;z) \mathbb J , \end{align*} where $ \mathbb J = \begin{pmatrix}  0 & I \\ -I & 0  \end{pmatrix}$

\begin{lemma}\label{W} Suppose $n \in \N_0= \N \cup \{0\},$ and  $W(z)= \begin{pmatrix}  U(n+1,z) & V(n+1,z)\\ U(n,z) & V(n, z) \end{pmatrix}$  then \[ W^t\mathbb JW = W\mathbb JW^t = \mathbb J \]\end{lemma}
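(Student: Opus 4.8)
The plan is to prove the identity $W^t\mathbb J W = \mathbb J$ by induction on $n$, exploiting the transfer-matrix recursion $W(z)$ at step $n$ relates to $W(z)$ at step $n-1$ via left-multiplication by $T(n;z)$. The key structural observation is that the quantity $W^t \mathbb J W$ is a discrete analogue of the Wronskian, and the whole point is that it is \emph{conserved} by the evolution. So first I would set up the base case: at $n=0$ we have $W(z) = \begin{pmatrix} U(1,z) & V(1,z) \\ U(0,z) & V(0,z)\end{pmatrix} = \begin{pmatrix} 0 & I \\ -I & 0\end{pmatrix} = \mathbb J$ by the initial conditions (\ref{ic}), and one checks directly that $\mathbb J^t \mathbb J \mathbb J = \mathbb J$ since $\mathbb J^t = -\mathbb J = \mathbb J^{-1}$.

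For the inductive step, write $W_n(z)$ for the block matrix at level $n$. From the recursion established just before the lemma, $W_n(z) = A(n;z)\mathbb J = T(n;z) A(n-1;z)\mathbb J = T(n;z) W_{n-1}(z)$. Substituting, $W_n^t \mathbb J W_n = W_{n-1}^t\, T(n;z)^t\, \mathbb J\, T(n;z)\, W_{n-1}$, so everything reduces to the single computation that the transfer matrix is \emph{symplectic} with respect to $\mathbb J$, i.e. $T(n;z)^t \mathbb J T(n;z) = \mathbb J$. This is the crux: using $T(m;z) = \begin{pmatrix} zI - B(m) & -I \\ I & 0\end{pmatrix}$ together with the symmetry of $B(m)$ (so that $(zI-B(m))^t = zI - B(m)$, working with the formal transpose rather than the adjoint), I would multiply out the three $2\times 2$ block products and verify that the off-diagonal blocks collapse to $\pm I$ and the diagonal blocks cancel, returning $\mathbb J$. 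Once this holds, $W_n^t \mathbb J W_n = W_{n-1}^t \mathbb J W_{n-1} = \mathbb J$ by the inductive hypothesis, closing the induction.

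The identity $W \mathbb J W^t = \mathbb J$ then follows by a short linear-algebra argument rather than a separate induction. Since $W_n = A(n;z)\mathbb J$ with $A(n;z)$ a product of transfer matrices, each of which is invertible (its determinant is that of the block with a nonzero $-I$ corner), $W_n$ is invertible; the relation $W^t \mathbb J W = \mathbb J$ then gives $W^{-1} = \mathbb J^{-1} W^t \mathbb J$, and rearranging yields $W \mathbb J W^t = (W^t)^{-1}\mathbb J^{-1}\cdots$, which upon using $\mathbb J^{-1} = -\mathbb J = \mathbb J^t$ simplifies to $W\mathbb J W^t = \mathbb J$. Concretely, from $W^t \mathbb J W = \mathbb J$ one gets $\mathbb J W = (W^t)^{-1}\mathbb J$, hence $W \mathbb J = W (W^t)^{-1}\mathbb J W^{-1}\cdot W$; it is cleaner to simply note that a matrix $M$ satisfying $M^t \mathbb J M = \mathbb J$ with $\mathbb J^t = -\mathbb J = \mathbb J^{-1}$ automatically satisfies $M \mathbb J M^t = \mathbb J$, which is the standard fact that left and right symplectic conditions coincide.

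I expect the main obstacle to be purely bookkeeping: carrying the $2\times 2$ block transpositions correctly and making sure that the transpose (not the conjugate transpose) is the right operation, which is exactly where the hypothesis that $B(n)$ is symmetric — as opposed to merely Hermitian — is used. The verification $T^t \mathbb J T = \mathbb J$ is the only real content, and it is a finite block computation; the symmetry of $B(m)$ is what makes the diagonal terms cancel, so I would flag that explicitly as the place where the standing assumption on $B$ enters.
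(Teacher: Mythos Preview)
Your proposal is correct and follows essentially the same route as the paper: both arguments hinge on the single block computation $T(n;z)^t\mathbb J\,T(n;z)=\mathbb J$, after which the paper uses $W=A(n;z)\mathbb J$ with $A$ symplectic to get $W^t\mathbb J W=\mathbb J^t\mathbb J\mathbb J=\mathbb J$ directly, while you package the same step as an induction via $W_n=T(n;z)W_{n-1}$. The paper dispatches $W\mathbb J W^t=\mathbb J$ with ``exactly the same way,'' which amounts to your observation that the left and right symplectic conditions coincide.
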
	
	\begin{proof} Notice that $T(n;z)^t\mathbb J T(n;z) = T(n;z)\mathbb J T(n;z)^t =\mathbb J $ for any $n$ so that
		
		   $A(n;z)^t\mathbb J A(n;z) = A(n;z)\mathbb J A(n;z)^t = \mathbb J .$ Then \begin{align*} W^t\mathbb JW  & = ( A(n;z) \mathbb J)^t \mathbb J A(n;z)\\ & = \mathbb J^t A(n;z)^t \mathbb J A(n;z)\mathbb J \\ & = \mathbb J^t\mathbb J\mathbb J \\ & = \mathbb J \end{align*} 
		
			Exactly the same way we can see: $W\mathbb JW^t = \mathbb J $ \end{proof}

			\begin{defi} The Wronskian of any two sequences  $f(n,z), g(n,z) \in l^2(\N, \C^d) $ is defined by \[ W_n (f,g)= [ f^*(n+1, \bar{z}) g(n,z) -  f^* (n, \bar{z})g(n+1,z)] .\] \end{defi} 
			
			This definition incorporate with the definition in one dimensional space and  in the continuous case. In \cite{KA}, it is shown that for fixed $z\in \C$,  if  $f(n,z), g(n,z) \in l^2(\N, \C^d) $ are any two solutions of (\ref{ds}) then $W_n (f, g)$ is independent of $n.$ Moreover, the Wronskian $W_n$ is linear in both arguments.
			
		 For $f(n,z), g(n,z) \in l^2(\N_0, \C^d)$ the Green's identity corresponding to equation (\ref{ds}) is given by  \[ \sum _ {j=0}^n \Big( f^* (\tau g) - (\tau f)^* g \Big)(j) = W_0( \bar{f},g)- W_n(\bar{f}, g) .\]

			We extend the definition of  Wronskian and the Green's identity  for the matrix valued solutions $U(n,z), V(n,z) $, each contains $d$ linearly independent solutions of (\ref{ds}) for fixed $z\in \C$.  \[ W_n (U,V)= [ U^*(n+1, \bar{z}) V(n,z) -  U^* (n, \bar{z})V(n+1,z)] .\] 
			
			It is shown in \cite{KA} that the Wronskian  $  W_n (U,V)$ is a matrix independent of $n\in\N.$  We extend the Green's Identity for these matrix valued solutions.
			\begin{align}\label{gi}  \sum_{j=0}^N \Big( F(j,z)^*(\tau G(j,z)) - (\tau F(j,z))^* G(j,z) \Big) = W_0(\bar{F}, G)-  W_N(\bar{F}, G).\end{align} Again, the proof of the Green's identity can be found in \cite{KA}.

	\section{Titchmarsh-Weyl $m$ function}
	
	The theory of Titchmarsh-Weyl $m$ functions is very important tool in the spectral theory of Jacobi and Schr\"odinger operators. In order to study the asymptotic behavior of solutions of Jacobi and Schr\"odinger equations, one need to study these $m$ functions. Moreover,  the absolutely continuous, singular continuous and essential spectrum  of the operators associated with these equations are well explained   in terms of $m$ functions. These $m$ functions were first introduced in 1910 by H. Weyl  in \cite{WM} for Sturn-Liouville differential equations. It was further studied by E. C. Titchmarsh in \cite{EC} and established the connection between the analyticity of the solution and the spectrum of the operator of Sturn-Liouville differential equations. For further history of $m$ function, one can see \cite{E}. The  theory of $m$ functions in one dimensional space has been widely studied, some of  which can be found in the papers \cite{BR, SF, FG, RC, BS, GT}. 
	
The Titchmarsh-Weyl $m$ function for the vector-valued discrete Schr\"odinger operators associated to the equation (\ref{ds}) is defined in terms of solutions as follows. 
	
	\begin{defi} Let $z\in \C^+= \{z\in \C : \operatorname{Im}(z) >0\}.$ The Titchmarsh-Weyl $m$ function is defined as the unique complex matrix $ M(z) \in \C^{d\times d}$ such that \begin{align} \label{wm} F(n,z) = U(n,z)+ V(n,z)M(z) \end{align} where $U(n,z), V(n,z) $ are matrix valued solutions consisting of $d$ linearly independent solutions with initial values (\ref{ic})  and the matrix valued solution $  F(n,z)$ is a set of $d$ linearly independent solutions of  (\ref{ds}) that are in $l^2(\N, \C^d).$\end{defi}	
	
This definition, is in fact well defined. As we mentioned above that there are only $d$ linearly independent solutions in $l^2(\N_0, \C^d)$, if there is another $M(z)$ satisfying the above conditions then the solutions from both  $U(n,z)$ and $V(n,z)$ will be in $l^2(\N_0, \C^d)$. The solution   $V(n,z)$ is such that $V(0,z)= 0$ which implies that  $V(n,z)$ is the set of eigen-functions for the self adjoint operator $J.$ This contradicts that the spectrum of $J$ is a set of real numbers.

	\begin{theorem}\cite{KA} Let $z\in \C^+.$ If $(\tau-z)F= 0$  and $F $ is a $d\times d$ matrix valued solution whose $d$ columns are linearly independent solutions of (\ref{ds}) that are in $ l^2(\N, \C^d).$ Then \begin{align} \label{mf} M(z)= -F(1,z)F(0,z)^{-1} .\end{align} Moreover, \begin{align} \label {mf2}M(z) =(m_{ij}(z))_{d\times d} \in \C^{d\times d}, \,\, m_{ij}(z) =\ip{\delta_j}{(J-z)^{-1}\delta_i}.\end{align} \end{theorem}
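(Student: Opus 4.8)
The plan is to establish \eqref{mf} first and then derive \eqref{mf2} by realizing $M(z)$ as a corner block of the resolvent kernel. For \eqref{mf}, I would begin with the normalized solution $F_0(n,z) := U(n,z) + V(n,z)M(z)$ coming from the defining relation \eqref{wm} and evaluate it at the two base sites. The initial conditions \eqref{ic} give at once $F_0(0,z) = U(0,z) + V(0,z)M(z) = -I$ and $F_0(1,z) = U(1,z) + V(1,z)M(z) = M(z)$, so that $-F_0(1,z)F_0(0,z)^{-1} = M(z)$. The content of the theorem is that the \emph{same} expression arises from an arbitrary $l^2$ matrix solution $F$ with linearly independent columns, as in the hypothesis.

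Here I would use that the space of solutions of \eqref{ds} lying in $l^2(\N,\C^d)$ is exactly $d$-dimensional (recalled from \cite{KA}). Since the columns of both $F_0$ and $F$ form bases of this space, there is a single invertible $C \in \C^{d\times d}$, independent of $n$, with $F(n,z) = F_0(n,z)C$ for all $n$. Consequently $F(0,z) = F_0(0,z)C = -C$ is invertible and $-F(1,z)F(0,z)^{-1} = -F_0(1,z)CC^{-1}F_0(0,z)^{-1} = -F_0(1,z)F_0(0,z)^{-1} = M(z)$, which proves \eqref{mf} and, in passing, the invertibility of $F(0,z)$. That invertibility can also be seen intrinsically: if $F(0,z)w = 0$ for some $w\neq 0$, then $n \mapsto F(n,z)w$ is a nonzero $l^2$ solution vanishing at $n=0$, i.e. an eigenvector of the self-adjoint operator $J$ at the non-real point $z$, which is impossible.

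For \eqref{mf2}, I would construct the kernel (Green's function) $G(n,m;z)$ of $(J-z)^{-1}$ from the two fundamental solutions singled out by the two boundary behaviors: $V(\cdot,z)$, which carries the Dirichlet condition $V(0,z)=0$ at the left endpoint, and the $l^2$ solution $F(\cdot,z)$ at $+\infty$. Writing $G(n,m;z) = V(n,z)\alpha(m)$ for $n \le m$ and $G(n,m;z) = F(n,z)\beta(m)$ for $n \ge m$, the agreement of the two expressions at $n=m$ together with the inhomogeneous equation $G(m+1,m) + G(m-1,m) + (B(m)-z)G(m,m) = I$ determines $\alpha(m),\beta(m)$ through the (constant) Wronskian of $V$ and $F$. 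Specializing to $m=1$ and using $V(0,z)=0$ collapses these relations, via the homogeneous identity $F(2,z)+F(0,z)+(B(1)-z)F(1,z)=0$, to $\beta(1) = -F(0,z)^{-1}$, so that $G(1,1;z) = F(1,z)\beta(1) = -F(1,z)F(0,z)^{-1} = M(z)$, in agreement with \eqref{mf}.

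Finally, reading off the $(i,j)$ entry with $\delta_i,\delta_j$ the canonical basis sequences seated at the first site gives $\ip{\delta_j}{(J-z)^{-1}\delta_i} = G(1,1;z)_{ji} = M(z)_{ij}$, where the last step invokes the symmetry $M(z)^t = M(z)$. This symmetry I would extract from the symplectic relation $A(n;z)^t\mathbb J A(n;z)=\mathbb J$ established in the proof of Lemma \ref{W}: it makes $F_0(n+1,z)^t F_0(n,z) - F_0(n,z)^t F_0(n+1,z)$ independent of $n$; evaluating at $n=0$ yields $M(z)-M(z)^t$, while letting $n\to\infty$ yields $0$, since $F_0\in l^2$ forces $F_0(n,z)\to 0$, whence $M(z)=M(z)^t$. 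I expect the main obstacle to be the Green's-function step: one must justify that for $z\in\C^+$, where $z\notin\sigma(J)$, the resolvent is genuinely represented by this kernel, i.e. that $n\mapsto G(n,m;z)$ is square-summable and that $(J-z)_nG(n,m;z)=\delta_{n,m}I$ holds in the operator sense, and one must track the matrix ordering and transposes carefully so that the extracted entry matches $m_{ij}(z)$ exactly.
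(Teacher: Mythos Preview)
Your argument for \eqref{mf} coincides with the paper's: evaluate the normalized $F_0=U+VM$ at $n=0,1$, then pass to an arbitrary $l^2$ basis via $F=F_0C$ with $C$ invertible. Your intrinsic reason for the invertibility of $F(0,z)$ (a nontrivial kernel would produce an $l^2$ eigenfunction at non-real $z$) is not in the paper but is a welcome addition.

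For \eqref{mf2} the two routes diverge. The paper works columnwise: it sets $g_i=(J-z)^{-1}\delta_i$, assembles $G(n,z)=[g_1,\dots,g_d]$, observes that $(\tau-z)G(n,z)=0$ for $n\ge 2$ so $G=F_0C$, and determines $C=I$ by plugging $n=1$ into the inhomogeneous relation $G(2)+B(1)G(1)-zG(1)=I$ and comparing with the homogeneous recursion for $F_0$. This yields $M(z)=G(1,z)$ directly, with no appeal to symmetry. Your approach instead builds the full Green kernel $G(n,m;z)$ from $V$ on the left and $F$ on the right, specializes to $m=1$ (where $V(0,z)=0$ kills the left piece), and reads off $G(1,1;z)=-F(1)F(0)^{-1}=M(z)$; the entrywise identification then forces you to invoke $M(z)^t=M(z)$, which you supply via the symplectic constancy of $F_0(n+1)^tF_0(n)-F_0(n)^tF_0(n+1)$ together with $F_0\in l^2$. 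Both arguments are correct. The paper's is shorter and self-contained (symmetry is proved \emph{afterwards}, from \eqref{mf2}); yours yields the whole resolvent kernel as a byproduct and gives an independent, structural proof of symmetry that does not rely on \eqref{mf2}. The caveat you flag---that the kernel really represents $(J-z)^{-1}$ on $l^2$---is exactly the point one must check, but it is routine here since $J$ is bounded self-adjoint and $z\notin\R$.
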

	
	\begin{proof} If the matrix valued soulution $F$ is given by (\ref{wm}) then $F(0,z)=-I$ and $F(1,z)=M(z).$ So (\ref{mf}) holds. Suppose $G(n,z)$ is any $d\times d$ matrix valued solution then it is a constant (matrix) multiple of the solution set $F(n,z)$ from (\ref{wm}) because (\ref{wm}) is a set of $d$ linearly independent solutions.  That is, \begin{align*} G(n,z)= F(n,z)C  \end{align*} where $C$ is a $d\times d$ scalar invertible matrix.  \begin{align*}F(n,z)
		= G(n,z)C^{-1} \end{align*} so that
		\begin{align*} -G(1,z)G(0,z)^{-1} =  & -F(1,z)CC^{-1}F(0,z)^{-1} \\ = &  -F(1,z)F(0,z)^{-1} \\ = & M(z) .\end{align*} Let $F(n,z)$ as in (\ref{mf}) and let \[ g_i= (J-z)^{-1} \delta_i\] where $\delta_i(n) \in  l^2(\N, \C^d)$ such that the values of  $\delta_i(n)=0$ if $ i\neq 0 $ and  $\delta_i(i)= [1,0,\dots 0]^t .$  Then $(J-z)g_i = \delta_i. $ So $(\tau-z)g_i(n) =0$ for $n\geq 2.$ Moreover $g_i \in l^2 $ for all $i=1,2,.......,d.$ Let $$G(n,z)= [g_1, g_2,......., g_d].$$ Then $G(n,z)= F(n,z)C,\,\, C\in \C^{d\times d}.$
		By comparing values at  $$n=1, \,\, G(1,z)= [g_1(1), g_2(1), ... ... ..., g_d(1)].$$ Here $$g_1(1)= (J-z)^{-1} \delta_1(1)$$ and $$g_1=[g_{11}, g_{21},...,...,     ...,g_{d1}]^t,\,\, g_{i1}= \ip{\delta_i}{ g_1} , i= 1,2,....,d.$$
		
		Then $M(z) = G(1,z)C^{-1}$ and  \begin{align*}M(z) & = (m_{ij}(z)) \\ & = ( \ip{\delta_j}{ (J-z)^{-1}\delta_i})C^{-1}. \end{align*} To find the value of $C,$ we compare values at $n=2.$
		
		First $(J-z)G(1,z)= (\delta_1, \delta_2,... ... , \delta_d)$ so 
		$$ (J-z)G(1,z)= \begin{pmatrix} 1 & 0 \hdots     &0\\ 0 &1  \hdots    &0\\ \vdots& \vdots  & \vdots\\ 0 & 0      \hdots   &1  \end{pmatrix} = I $$ It follows that $$ G(2,z)+ B(1)G(1,z)-zG(1,z)= I $$ $$G(2,z) = (z-B(1)) G(1,z)+I.............(i)$$ 
		Also,  	
		$$ F(2,z)= (z-B(1)) F(1,z)-F(0,z)C $$ $$ G(2,z) = (z-B(1))G(1,z)- G(0,z) ........ ....(ii) $$\\ Comparing (i) and (ii), we get	$-F(0,z)C=I$ and so $I.C=I \implies C=I .$ Hence (\ref{mf2}) holds. That is \begin{align} \nonumber M(z) & = ( m_{ij}(z)) \\ & = ( \ip{\delta_j }{(J-z)^{-1}\delta_i} ) .\end{align}
		
		\end{proof}

This result allows us to connect the $m$ function with a matrix valued Borel measure using functional calculus for these resolvent operators $ \ip{\delta_j }{(J-z)^{-1}\delta_i},$ where $\delta_i(n) \in  l^2(\N, \C^d)$ such that the values of  $\delta_i(n)=0$ if $ i\neq 0 $ and  $\delta_i(i)= [1,0,\dots 0]^t $

 	 By functional calculus,\[  m_{ij}(z) = \int_{\R} \frac{1}{t-z} d\mu_{ij}\]
 	 where $\mu_{ij}$ is a spectral measure for the vectors $\delta_j$ and $\delta_i$. Therefore, \[ M(z) =\int_{\R} \frac{1}{t-z} d\mu, \,\,\, \mu= (\mu_{ij})_{d\times d}\] and  \[ M(z) =\int_{\R} \frac{1}{t-z} d\mu = \Big( \int_{\R} \frac{1}{t-z} d\mu_{ij} \Big)_{d\times d} \]
 	 The matrix valued measure $\mu$ is a spectral measure of the $d-$dimensional discrete Schr\"odinger  operator $J.$
 	 
 	 For each $i, j$ the entries $m_{i,j}(z)$ maps complex upper half plane to itself. For if $z \in \C^+$,  $ \operatorname{Im} m_{ij}(z)= \frac{1}{2i}(  m_{ij}(z) - m_{ij}( \bar{z})) = \int_{\R} \frac{y}{|t-z|^2}d\mu_{ij}  > 0 $

 	 Suppose $\overline{M(z)}$ denotes the complex conjugate of $M(z)$ obtained by taking the complex conjugate of each entries of $ M(z) $. Then by integral representation of $ m_{ij}(z)$, we have $ m_{ij}(z) = m_{ij}( \bar{z}))$ so that   $ \overline{M(z)} = M(\bar{z})$.\\ Also,  $M(z)=( m_{ij}(z)) = ( \ip{\delta_j }{(J-z)^{-1}\delta_i} ) $ so that
 	 	\begin{align*} m_{ij}(z) = & \ip{\delta_j }{(J-z)^{-1}\delta_i} \\ = & \ip{ (J-\bar{z})(J-\bar{z})^{-1}\delta_j}{(J-z)^{-1}\delta_i} \\ = &  \ip{ (J-\bar{z})^{-1}\delta_j}{(J-\bar{z})^*(J-z)^{-1}\delta_i}  \end{align*} Since $J$ is self adjoint,  $(J-\bar{z})^*= (J-z)$	\begin{align*}m_{ij}(z)= &  \ip{ (J-\bar{z})^{-1}\delta_j}{\delta_i}  \\ = & \overline{\ip{\delta_i}{(J-\bar{z})^{-1}\delta_j} }\\	 
 	 	= & \overline{m_{ji}(\bar{z})}\\
 	 	= & m_{ji}(z) \end{align*} for all $i, j.$ Hence $M(z)^t = M(z).$ Thus we proved the following proposition.
 	 	
 	 	\begin{prop} $M(z)^*= M(\bar{z}), $\end{prop}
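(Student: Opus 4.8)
The plan is to obtain the identity directly from the two facts established in the computations immediately preceding the statement, namely the entrywise conjugation relation $\overline{M(z)} = M(\bar z)$ and the symmetry $M(z)^t = M(z)$. Writing $M(z)^* = \overline{M(z)^t}$ by the definition of conjugate transpose, I would first apply the symmetry to replace $M(z)^t$ by $M(z)$, giving $M(z)^* = \overline{M(z)}$, and then invoke the conjugation relation to conclude $M(z)^* = M(\bar z)$. In this form the proposition is a one-line corollary, with all the real work already carried out in the resolvent manipulations above.

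Alternatively, and in a self-contained way, I would argue entrywise from the resolvent formula $m_{ij}(z) = \ip{\delta_j}{(J-z)^{-1}\delta_i}$. The $(i,j)$ entry of $M(z)^*$ is $\overline{m_{ji}(z)}$, and since $\ip{\cdot}{\cdot}$ is conjugate linear in its first slot one has $\overline{m_{ji}(z)} = \ip{(J-z)^{-1}\delta_j}{\delta_i}$. Transferring the resolvent to the second slot produces $\bigl((J-z)^{-1}\bigr)^* = \bigl((J-z)^*\bigr)^{-1}$, and the self-adjointness of $J$ gives $(J-z)^* = J-\bar z$, so the expression becomes $\ip{\delta_j}{(J-\bar z)^{-1}\delta_i} = m_{ij}(\bar z)$, which is exactly the $(i,j)$ entry of $M(\bar z)$. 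Comparing entries then yields $M(z)^* = M(\bar z)$.

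The only point requiring care is the bookkeeping of the inner-product convention: because $\ip{u}{v} = \sum_n u(n)^* v(n)$ is conjugate linear in its first argument, one must pass to the adjoint of the resolvent when moving it across the inner product, and it is precisely the self-adjointness $(J-z)^* = J-\bar z$ that turns $z$ into $\bar z$. There is no genuine obstacle here; the statement simply records that the previously derived symmetry ($M(z)^t = M(z)$) and conjugation ($\overline{M(z)} = M(\bar z)$) properties are compatible, combining into the single Herglotz-type reflection identity $M(z)^* = M(\bar z)$.
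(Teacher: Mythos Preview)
Your proposal is correct and matches the paper's approach: the paper derives $\overline{M(z)}=M(\bar z)$ from the integral representation and $M(z)^t=M(z)$ from the resolvent identity in the paragraphs immediately preceding the proposition, then records $M(z)^*=M(\bar z)$ as their combination, exactly as in your first argument. Your alternative entrywise computation is the same resolvent--adjoint manipulation the paper uses for symmetry, merely reorganized to reach $M(z)^*=M(\bar z)$ directly rather than via the intermediate step $M(z)^t=M(z)$.
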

 	 	
 The imaginary part of $M(z)$ is  $ \operatorname {Im}M(z) = \frac{1}{2i}(M(z)- M(z)^*) $ and it is clear from the above observation that $\operatorname {Im}M(z) > 0 .$
 	 
Let $\mathcal S$ be a subspace of $\C^{d\times d}$, consisting of all symmetric matrices with positive definite imaginary part. That is, \[ \mathcal S = \{ M \in \C^{d\times d} : \frac{1}{2i}(M - M^*) >0 \}\]	 
 The space $\mathcal S $ is called a Seigel upper half space.
 
 From above discussion we proved\\
 
 \begin{theorem} For $z\in \C^+,$ the map $z \mapsto M(z)$ maps complex upper half plane $\C^+$ to Seigel upper half space $\mathcal S $. \end{theorem}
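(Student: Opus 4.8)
The plan is to verify the two defining conditions for membership in $\mathcal S$ separately: that $M(z)$ is symmetric, and that its imaginary part $\frac{1}{2i}(M(z)-M(z)^*)$ is positive definite. Both ingredients are essentially assembled from the facts established just above the statement, so the argument is largely a matter of organizing them and supplying the one genuinely matrix-level computation, namely positive \emph{definiteness} as opposed to mere entrywise positivity.

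For symmetry I would combine the entrywise conjugation relation $\overline{M(z)}=M(\bar z)$, which follows from the integral representation of each $m_{ij}$, with the Proposition $M(z)^*=M(\bar z)$. Together these force $M(z)^*=\overline{M(z)}$, i.e.\ $M(z)^t=M(z)$; equivalently, one reads off $m_{ij}(z)=m_{ji}(z)$ directly from the resolvent formula $m_{ij}(z)=\ip{\delta_j}{(J-z)^{-1}\delta_i}$ and the self-adjointness of $J$, exactly as carried out in the derivation preceding the Proposition. Either route places $M(z)$ among the symmetric matrices.

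For the imaginary part I would deliberately avoid leaning on the entrywise inequality $\operatorname{Im} m_{ij}(z)>0$, since positivity of each entry does not imply positive definiteness of the matrix $\operatorname{Im} M(z)$. Instead I would test the Hermitian form against an arbitrary nonzero vector $v=(v_1,\dots,v_d)^t\in\C^d$. Setting $\phi=\sum_{j}v_j\delta_j\in l^2(\N,\C^d)$, which is nonzero because the $\delta_j$ are linearly independent, the symmetry of $M$ lets me identify $v^*\big(\operatorname{Im} M(z)\big)v$ with $\operatorname{Im}\ip{\phi}{(J-z)^{-1}\phi}$. The resolvent identity $(J-z)^{-1}-(J-\bar z)^{-1}=(z-\bar z)(J-z)^{-1}(J-\bar z)^{-1}$ together with $J^*=J$ then gives $\operatorname{Im}\ip{\phi}{(J-z)^{-1}\phi}=\operatorname{Im}(z)\,\|(J-\bar z)^{-1}\phi\|^2$, which is strictly positive for $z\in\C^+$ since $(J-\bar z)^{-1}$ is injective and $\phi\neq 0$. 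Hence $\operatorname{Im} M(z)>0$, and therefore $M(z)\in\mathcal S$; one can alternatively reach the same conclusion from $\operatorname{Im} M(z)=\int_{\R}\frac{\operatorname{Im}(z)}{|t-z|^2}\,d\mu$ once $\mu$ is known to be positive-definite-matrix-valued.

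I expect this positive-definiteness step to be the main obstacle, precisely because the preceding discussion only secured positivity of the individual imaginary parts $\operatorname{Im} m_{ij}(z)$; the real content is upgrading this to positivity of the quadratic form $v^*\big(\operatorname{Im} M(z)\big)v$ for every nonzero $v$. The cleanest device is the resolvent identity, which rewrites the imaginary part as a manifestly nonnegative norm squared and makes the strict inequality transparent from injectivity of the resolvent.
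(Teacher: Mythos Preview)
Your proposal is correct. The paper offers no separate proof of this theorem: it simply writes ``From above discussion we proved,'' so the comparison is with the discussion immediately preceding the statement. For the symmetry of $M(z)$ your argument is identical to the paper's --- the identity $m_{ij}(z)=m_{ji}(z)$ is derived there from the resolvent formula $m_{ij}(z)=\ip{\delta_j}{(J-z)^{-1}\delta_i}$ and self-adjointness of $J$, together with $\overline{M(z)}=M(\bar z)$, exactly as you describe.

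Where you diverge is on positive definiteness of $\operatorname{Im} M(z)$, and here your argument is genuinely more careful than the paper's. The paper's only explicit computation is the entrywise one, $\operatorname{Im} m_{ij}(z)=\int_{\R}\frac{y}{|t-z|^2}\,d\mu_{ij}>0$, after which it asserts ``it is clear from the above observation that $\operatorname{Im}M(z)>0$.'' As you correctly point out, entrywise positivity does not yield positive definiteness (and the entrywise claim is itself suspect for $i\neq j$, since the off-diagonal $\mu_{ij}$ are signed measures). Your quadratic-form computation via the resolvent identity, rewriting $v^*\operatorname{Im} M(z)\,v$ as $\operatorname{Im}(z)\,\|(J-\bar z)^{-1}\phi\|^2>0$, supplies precisely the matrix-level positivity that the paper either leaves as a gap or tacitly infers from positive-definiteness of the matrix spectral measure $\mu$ without saying so. So your route agrees with the paper in structure but substitutes a rigorous argument for the paper's hand-wave at the key step.
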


 \section {Titchmarsh-Weyl circles and disks}
 
 In this section, we  define the Titchmarsh-Weyl circles and disks. We consider the equation (\ref{ds}) on  a compact interval $[0, N].$ Suppose 
 $U(n,z), V(n,z) $ are the matrix valued solutions of (\ref{ds}) with initial values (\ref{ic}).  For $z\in \C^+$, define a matrix valued solution 
 
 $F(n,z) = U(n,z)+ V(n,z) M_N^{\beta}(z) $ satisfying a boundary condition
 
 $ \beta_2 F(N,z) + \beta_1 F(N+1, z)  = 0 $ where \begin{align} \label{bc}  \beta =[\beta_1, \beta_2 ]\in \R^{d\times 2d},\, \beta_1, \beta_2 \in \R^{d\times d},  \,\,  \beta ^t \beta = I, \,\,\,\beta J  \beta^t = 0 . \end{align}
 The unique coefficient $ M_N^{\beta}(z) $ is called the Weyl $m$ function on the interval $[0, N].$
 
  On solving we see that, \begin{align} \label{M} M_N^{\beta}(z) & = -  \big(\beta_2 V(N,z)  + \beta_1 V(N+1, z)  \big)^{-1 }\big(\beta_2 U(N,z)  +\beta_1  U(N+1, z)  \big) .\end{align} Note that $ \big(\beta_2 V(N,z)  + \beta_1 V(N+1, z)  \big)$ is invertible.  Since $z, N. \beta $ varies, $ M_N^{\beta}(z)$ becomes a function of these arguments, and since $ U, V $ are matrix polynomials with entries meromorphic functions of $z$.

 \begin{lemma}\label{wms} The weyl $m$ function $M_N^{\beta}(z)$ on $[0, N]$ is symmetric. \end{lemma}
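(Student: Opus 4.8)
The plan is to reduce the claimed symmetry $M_N^{\beta}(z)^t = M_N^{\beta}(z)$ to a single symplectic-type identity between the two bracketed factors in \eqref{M}, and then to read that identity off from Lemma \ref{W} together with the boundary condition \eqref{bc}. To this end I abbreviate
\[ Q = \beta_2 U(N,z) + \beta_1 U(N+1,z), \qquad P = \beta_2 V(N,z) + \beta_1 V(N+1,z), \]
so that $M_N^{\beta}(z) = -P^{-1}Q$, where $P$ is invertible as noted after \eqref{M}. Since $M_N^{\beta}(z)^t = -Q^t (P^t)^{-1}$, the matrix $M_N^{\beta}(z)$ is symmetric exactly when $P^{-1}Q = Q^t(P^t)^{-1}$; multiplying this on the left by $P$ and on the right by $P^t$, the condition becomes $QP^t = PQ^t$, i.e. that $QP^t$ is a symmetric matrix. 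The whole proof thus comes down to verifying $QP^t = PQ^t$.

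First I would extract the needed component relations from Lemma \ref{W} evaluated at $n=N$. Writing $A = U(N+1,z)$, $B = V(N+1,z)$, $C = U(N,z)$, $D = V(N,z)$, I would use the identity $W\mathbb J W^t = \mathbb J$ (rather than $W^t\mathbb J W = \mathbb J$, since the products occurring in $QP^t$ pair each solution block with a transpose on the right), which expands into the block equations
\[ AB^t = BA^t, \qquad AD^t - BC^t = I, \qquad CB^t - DA^t = -I, \qquad CD^t = DC^t. \]
In parallel I would decode the boundary condition \eqref{bc}: a direct computation gives $\beta\mathbb J\beta^t = \beta_1\beta_2^t - \beta_2\beta_1^t$, so the hypothesis $\beta\mathbb J\beta^t = 0$ is precisely the statement $\beta_1\beta_2^t = \beta_2\beta_1^t$.

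The key step is then a direct expansion of $QP^t - PQ^t$, grouped into the four blocks of the form $\beta_i(\cdots)\beta_j^t$. The two outer blocks $\beta_2(CD^t - DC^t)\beta_2^t$ and $\beta_1(AB^t - BA^t)\beta_1^t$ vanish by the symmetry relations $CD^t = DC^t$ and $AB^t = BA^t$. The two mixed blocks collapse, via $CB^t - DA^t = -I$ and $AD^t - BC^t = I$, to $\beta_2(CB^t - DA^t)\beta_1^t + \beta_1(AD^t - BC^t)\beta_2^t = -\beta_2\beta_1^t + \beta_1\beta_2^t$. By the decoded boundary condition this residual term is zero, so $QP^t = PQ^t$, and hence $M_N^{\beta}(z)$ is symmetric.

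I expect the only genuine difficulty to be bookkeeping rather than ideas: selecting the correct one of the two identities furnished by Lemma \ref{W} (the $W\mathbb J W^t$ form, whose blocks involve exactly the products $AB^t$, $AD^t-BC^t$, $CD^t$ appearing above), and keeping the factor order and signs straight so that the mixed blocks cancel against $\beta_1\beta_2^t - \beta_2\beta_1^t$. The one analytic point to flag is that invertibility of $P$ is what licenses the passage from $P^{-1}Q = Q^t(P^t)^{-1}$ to the cleaner equivalent $QP^t = PQ^t$, so I would invoke that invertibility explicitly at the reduction step.
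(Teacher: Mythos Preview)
Your proof is correct and follows essentially the same route as the paper: both reduce the symmetry of $-P^{-1}Q$ to the vanishing of $QP^t-PQ^t$ (the paper writes this as $F^{-1}[FE^T-EF^T]F^{-T}$ with $E=\beta\mathbb U$, $F=\beta\mathbb V$) and then kill that expression using the symplectic identity from Lemma~\ref{W} together with $\beta\mathbb J\beta^t=0$. The only cosmetic difference is that the paper stays in $2d\times 2d$ block form and invokes $A(N;z)\mathbb J A(N;z)^t=\mathbb J$ directly, whereas you unpack Lemma~\ref{W} into the four scalar block relations and the boundary condition into $\beta_1\beta_2^t=\beta_2\beta_1^t$; the underlying computation is the same.
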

 
 \begin{proof} Let $\mathbb U(z) =\begin{pmatrix}  U(N+1)\\ U(N)  \end{pmatrix} = A(N;z) \begin{pmatrix}  U(1)\\ U(0)  \end{pmatrix} = A(N;z) \begin{pmatrix}  0\\ -I  \end{pmatrix} $
 	  and  
 	  
 	   $ \mathbb V(z) =\begin{pmatrix}  V(N+1) \\ V(N)  \end{pmatrix} = A(N;z) \begin{pmatrix}  V(1)\\ V(0)  \end{pmatrix} = A(N;z) \begin{pmatrix}  I \\ 0 \end{pmatrix}$\\
 	   
Using equation (\ref{M}), the Weyl $m$ function can be written as $ M_N^{\beta}(z) = - (\beta \mathbb V(z))^{-1}(\beta\mathbb U(z) ) .$ Suppose $E =\beta\mathbb U(z) $	 and  $F =\beta\mathbb V(z) $ so that $M_N^{\beta}(z)= - F^{-1}E .$
Now, \begin{align*} M_N^{\beta}(z)^T -  M_N^{\beta}(z) & = F^{-1}E- (F^{-1}E)^T \\ &= F^{-1} [ FE^T -EF^T] F^{-T}  \\ &=  F^{-1} [\beta \mathbb V (\beta \mathbb U)^T - \beta \mathbb U (\beta \mathbb V)^T ] F^{-T} \\ &= F^{-1}\beta [\mathbb V \mathbb U^T - \mathbb U \mathbb V^T ]\beta^T F^{-T} \\ & =  F^{-1}\beta\Big[A(N;z) \begin{pmatrix}  I \\ 0 \end{pmatrix} \Big( A(N;z) \begin{pmatrix}  0 \\ -I \end{pmatrix} \Big)^T   \\ &  - A(N;z) \begin{pmatrix}  I \\ 0 \end{pmatrix}\Big( A(N;z) \begin{pmatrix}  0 \\ -I \end{pmatrix} \Big)^T \Big]\beta^T F^{-T}  \\ &= - F^{-1}\beta\Big[A(N;z) J  A(N;z) ^T  \Big]\beta^T F^{-T} \\ &= -  F^{-1}\beta J \beta^T F^{-T} \\ &= 0 \end{align*}   

\end{proof}

 \begin{lemma} \label{pdm} For a matrix valued solution   $F(n,z) = U(n,z)+ M_N^{\beta}(z) V(n,z) $ of (\ref{ds}) we have $W_N(\bar{F}, F) = 2i\operatorname{Im} M - 2i\operatorname{Im}z \sum_{j=0}^N F(j,z)^* F(j,z).$ 
 \end{lemma}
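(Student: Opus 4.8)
The plan is to read this off the matrix Green's identity (\ref{gi}) by taking both slots to be the same solution $F$, and then to evaluate the resulting boundary Wronskian at the left endpoint using the initial data (\ref{ic}).

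First I would set $G = F$ in (\ref{gi}). Since $F$ solves (\ref{ds}) we have $\tau F = zF$, and because $B(n)$ is Hermitian the expression $\tau$ is formally self-adjoint, so the summand collapses: $F(j,z)^*(\tau F)(j,z) - (\tau F)(j,z)^* F(j,z) = (z-\bar z)\,F(j,z)^* F(j,z) = 2i\operatorname{Im}(z)\,F(j,z)^* F(j,z)$. Hence the left-hand side of (\ref{gi}) becomes $2i\operatorname{Im}(z)\sum_{j=0}^N F(j,z)^* F(j,z)$, and (\ref{gi}) reads $W_0(\bar F, F) - W_N(\bar F, F) = 2i\operatorname{Im}(z)\sum_{j=0}^N F(j,z)^* F(j,z)$. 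This already reduces the claim to the single boundary identity $W_0(\bar F, F) = 2i\operatorname{Im} M$, after which solving for $W_N(\bar F, F)$ produces the stated formula.

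Next I would compute $W_0(\bar F, F)$ from the definition of the Wronskian together with the initial values. Because $U(0,z) = -I$, $V(0,z) = 0$, $U(1,z) = 0$ and $V(1,z) = I$, the solution $F(n,z) = U(n,z) + M_N^{\beta}(z) V(n,z)$ has boundary values $F(0,z) = -I$ and $F(1,z) = M_N^{\beta}(z)$, and the same evaluation at $\bar z$ gives $F(0,\bar z) = -I$, $F(1,\bar z) = M_N^{\beta}(\bar z)$. Substituting these into $W_0(\bar F, F) = (\bar F)^*(1,\bar z) F(0,z) - (\bar F)^*(0,\bar z) F(1,z)$ yields $M_N^{\beta}(z) - M_N^{\beta}(\bar z)$.

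The step I expect to require the most care is identifying this difference with $2i\operatorname{Im} M$, i.e. getting the conjugation bookkeeping right so that the answer is $M - M^*$ and not the transpose-only expression $M - M^t$, which would vanish by symmetry. For this I would use Lemma \ref{wms}, which gives $M_N^{\beta}(z)^t = M_N^{\beta}(z)$, together with the reality of the data: since $U$ and $V$ are matrix polynomials with real matrix coefficients and $\beta_1,\beta_2$ are real, formula (\ref{M}) gives $\overline{M_N^{\beta}(z)} = M_N^{\beta}(\bar z)$, and combining this with symmetry yields $M_N^{\beta}(\bar z) = M_N^{\beta}(z)^*$. Therefore $W_0(\bar F, F) = M_N^{\beta}(z) - M_N^{\beta}(z)^* = 2i\operatorname{Im} M$, which is exactly the boundary identity needed, and plugging it back into the reduced form of (\ref{gi}) closes the argument.
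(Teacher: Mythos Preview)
Your proof is correct and follows essentially the same route as the paper: apply the Green's identity (\ref{gi}) with $G=F$ to reduce to the boundary term $W_0(\bar F,F)$, then evaluate that term from the initial data (\ref{ic}); the paper expands $W_0(\bar F,F)$ by bilinearity into four pieces whereas you plug in $F(0,z)=-I$, $F(1,z)=M$ directly, but the computation is the same. Your explicit invocation of Lemma~\ref{wms} and the reality of $U,V,\beta$ to turn $M_N^\beta(z)-M_N^\beta(\bar z)$ into $M-M^*=2i\operatorname{Im}M$ is in fact a bit more careful than the paper, which writes $M-\bar M$ and passes to $2i\operatorname{Im}M$ without comment.
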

  
\begin{proof} We use the Greens identity (\ref{gi})	 with $G=F.$   
 \begin{align*}  \sum_{j=0}^N \Big( F(j,z)^*(\tau F(j,z)) - (\tau F(j,z))^* F(j,z) \Big) = W_0(\bar{F}, F)-  W_N(\bar{F}, F) \\ (z-\bar{z}) \sum_{j=0}^N F(j,z)^* F(j,z) = W_0(\bar{F}, F)-  W_N(\bar{F}, F) \end{align*}   
 	For  $F(n,z) = U(n,z)+ M_N^{\beta}(z) V(n,z) $, using the linearity of the  Wronskian we get
 	 \begin{align*} & \sum_{j=0}^N \Big( F(j,z)^*(\tau F(j,z)) - (\tau F(j,z))^* F(j,z) \Big) = W_0(\bar{F}, F)-  W_N(\bar{F}, F) \\ &(z-\bar{z}) \sum_{j=0}^N F(j,z)^* F(j,z) = W_0(\bar{F}, F)-  W_N(\bar{F}, F) \\ & =  W_0(\overline{U+VM}, U+VM)-  W_N(\bar{F}, F) \\& =  W_0(\overline{U}, U)+  W_0(\overline{U}, VM) +  W_0(\overline{VM}, U) +  W_0(\overline{VM}, VM ) - W_N(\bar{F}, F) .\end{align*}  Then we have

 	    Here   $W_0(\overline{U}, U) =  W_0(\overline{VM}, VM ) =0, \,\,\, W_0(\overline{VM}, U)= -\bar{M},\,\,W_0(\overline{U}, VM) = M$
 
  \begin{align*} & (z-\bar{z}) \sum_{j=0}^N F(j,z)^* F(j,z)   = M - \bar{M}  - W_N(\bar{F}, F) \\ & 2i\operatorname{Im}z \sum_{j=0}^N F(j,z)^* F(j,z) = 2i\operatorname{Im} M -  W_N(\bar{F}, F) \\ & W_N(\bar{F}, F) = 2i\operatorname{Im} M - 2i\operatorname{Im}z \sum_{j=0}^N F(j,z)^* F(j,z) \end{align*} \end{proof}
 The condition on $ \beta$ in the boundary condition (\ref{bc}) implies that $\beta_1$ and $\beta_2$ are invertible. 
 Equation (\ref{M}) is written as \begin{align*}  M_N^{\beta}(z) & = -  \big(\beta_2 V(N,z)  + \beta_1 V(N+1, z)  \big)^{-1 }\big(\beta_2 U(N,z)  +\beta_1  U(N+1, z)  \big)  \\ & = -  \big( \beta_1 ^{-1}\beta_2 V(N,z)  +  V(N+1, z)  \big)^{-1 }\big( \beta_1 ^{-1} \beta_2 U(N,z)  + U(N+1, z)  \big)    \\ &= -  \big(\gamma V(N,z)  +  V(N+1, z)  \big)^{-1 }\big( \gamma U(N,z)  + U(N+1, z)  \big)  ,\,\,\, \gamma =  \beta_1 ^{-1} \beta_2   \in \R^{d\times d}. \end{align*}

 Again solving for $ \gamma$ we have,\[  \gamma = -F(N+1,z)F(N,z)^{-1} .\]

 Observe  that $\Im \gamma = \frac{1}{2i}(\gamma - \gamma^*) = 0 $

 Let $ \mathcal W(N, z, M ) = \begin{pmatrix} U(N+1, z) & V(N+1,z) \\ U(N, z) & V(N,z) \end{pmatrix}\begin{bmatrix}I\\ M \end{bmatrix} .$ Define a matrix function\[ E(M, N) = -i \mathcal W(N, z, M )^*J \mathcal W(N, z, M ) \]
 
 Observe that  \begin{align} \label{wd1} \nonumber E(M, N)  & = -i [F(N+1,z)^*, F(N,z)^*] J \begin{bmatrix}F(N+1,z)\\F(N,z)\end{bmatrix} \\  \nonumber &= -i W_N(\bar{F}, F ) \\  & = -2\operatorname{Im} M + 2\operatorname{Im}z \sum_{j=0}^N F(j,z)^* F(j,z).\end{align}

 \begin{defi}  Let $z \in \C^+ .$ The set \[ \mathcal D(N,z) = \{ M\in C^{d\times d}| E(M, N) \leq 0 \} \text {  and } C(N,z) = \{ M \in C^{d\times d}| E(M, N) = 0  \} \] are respectively called the Weyl disk and Weyl circle.  \end{defi}
 
  Clearly,  $C_N(z) = \{M_N^{\beta}(z): \beta \in \R^{d\times d}, \text{  satisfying  } (\ref{bc}) \}.$  Thus for any complex symmetric matrix $M \in \C^{d\times d}$ \[ M \in C(N,z) \iff \operatorname{ Im }(- F(N+1,z) F(N, z)^{-1} ) = 0  \]
 
 \begin{theorem} The map $ z\mapsto M_N^{\beta}(z)$ maps complex upper half plane to Seigel half space.     \end{theorem}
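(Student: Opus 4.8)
The plan is to read off the two defining conditions of the Siegel upper half space $\mathcal S$ for $M:=M_N^\beta(z)$ separately, namely symmetry and $\operatorname{Im} M>0$. Symmetry is already available: Lemma~\ref{wms} gives $M_N^\beta(z)^T=M_N^\beta(z)$ for every admissible $\beta$, so no further work is needed there. The entire content of the theorem is therefore the strict positivity $\operatorname{Im} M=\frac{1}{2i}(M-M^*)>0$, and I would derive it from the Weyl--disk identity assembled just above the statement.

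Let $F(n,z)=U(n,z)+V(n,z)M_N^\beta(z)$ be the associated matrix solution. The key relation is (\ref{wd1}), $E(M,N)=-2\operatorname{Im} M+2\operatorname{Im} z\sum_{j=0}^N F(j,z)^*F(j,z)$, where $E(M,N)=-i\,W_N(\bar F,F)$. The first step is to show that the Weyl solution lies exactly on the Weyl circle, i.e. $E(M_N^\beta(z),N)=0$. This is where the boundary data (\ref{bc}) enters. From $\beta_1 F(N+1,z)+\beta_2 F(N,z)=0$ one obtains $\gamma:=\beta_1^{-1}\beta_2=-F(N+1,z)F(N,z)^{-1}$, and the symplectic constraint $\beta\mathbb{J}\beta^T=0$ unpacks to $\beta_1\beta_2^T=\beta_2\beta_1^T$, which forces $\gamma^T=\gamma$; this is exactly the observation $\operatorname{Im}\gamma=\frac{1}{2i}(\gamma-\gamma^*)=0$ recorded before the statement. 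Substituting $F(N+1,z)=-\gamma F(N,z)$ into $W_N(\bar F,F)=F(N+1,z)^*F(N,z)-F(N,z)^*F(N+1,z)$ collapses it to $F(N,z)^*(\gamma-\gamma^*)F(N,z)=0$, so $E(M_N^\beta(z),N)=0$ as claimed.

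With $E(M_N^\beta(z),N)=0$ in hand, (\ref{wd1}) rearranges to $\operatorname{Im} M_N^\beta(z)=\operatorname{Im} z\sum_{j=0}^N F(j,z)^*F(j,z)$. Each summand $F(j,z)^*F(j,z)$ is positive semidefinite, so the sum is at least positive semidefinite; the one genuine obstacle is upgrading this to strict positivity. Here I would invoke the initial conditions (\ref{ic}): since $U(0,z)=-I$ and $V(0,z)=0$, the $j=0$ term equals $F(0,z)^*F(0,z)=(-I)^*(-I)=I$, whence $\sum_{j=0}^N F(j,z)^*F(j,z)\ge I>0$. Because $z\in\C^+$ gives $\operatorname{Im} z>0$, the quantity $\operatorname{Im} z\sum_{j=0}^N F(j,z)^*F(j,z)$ is a positive scalar times a positive definite matrix, hence positive definite, so $\operatorname{Im} M_N^\beta(z)>0$.

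Combining the symmetry from Lemma~\ref{wms} with $\operatorname{Im} M_N^\beta(z)>0$ places $M_N^\beta(z)\in\mathcal S$ for every $z\in\C^+$, which is the assertion. I expect the only delicate point to be the strict definiteness of the Wronskian sum $\sum_{j=0}^N F(j,z)^*F(j,z)$; once that is pinned down via $F(0,z)=-I$, the remainder is a direct assembly of Lemma~\ref{wms}, Lemma~\ref{pdm}, and the circle identity (\ref{wd1}).
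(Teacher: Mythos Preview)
Your argument is correct and follows exactly the route the paper takes: invoke Lemma~\ref{wms} for symmetry, use that $M_N^\beta(z)$ lies on the Weyl circle (i.e.\ $E(M_N^\beta(z),N)=0$), and then read off $\operatorname{Im}M=\operatorname{Im}z\sum_{j=0}^N F(j,z)^*F(j,z)>0$ from Lemma~\ref{pdm}/equation~(\ref{wd1}). The paper's proof simply asserts $M_N^\beta(z)\in C(N,z)$ and the strict positivity of the sum without comment; you supply both justifications explicitly (the Wronskian computation via $\gamma=\gamma^*$ for the first, and the $j=0$ term $F(0,z)^*F(0,z)=I$ for the second), so your version is in fact more complete than the original.
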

 	
 	\begin{proof}By lemma \ref{wms}$M_N^{\beta}(z)$ is symmetric.   Since $M_N^{\beta}(z) \in C(N,z), $ $E(M, N) =0.$ It follows that $ -i W_N(\bar{F}, F ) =0 .$ By lemma \ref{pdm} we have \[ 2\operatorname{Im} M - 2\operatorname{Im}z \sum_{j=0}^N F(j,z)^* F(j,z) =0 .\] That is \[ \frac{\operatorname{Im} M }{\operatorname{Im}z} =  \sum_{j=0}^N F(j,z)^* F(j,z) > 0  .\] This implies that $ \operatorname{Im} M $ is positive definite. \end{proof}
 	
 	\begin{lemma}[Nesting property of Weyl disks] Let $z \in \C^+ .$ Then \[\mathcal D(N+1,z) \subset \mathcal D(N,z),\,\, N \in \N\] \end{lemma}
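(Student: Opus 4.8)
The plan is to work directly from the closed-form expression for $E(M,N)$ recorded in equation (\ref{wd1}), namely
\[ E(M, N) = -2\operatorname{Im} M + 2\operatorname{Im}z \sum_{j=0}^N F(j,z)^* F(j,z), \]
and to compare $E(M,N)$ with $E(M,N+1)$ for a \emph{fixed} symmetric $M$. The crucial observation is that the matrix-valued solution $F(n,z) = U(n,z) + V(n,z)M$ depends only on $n$, $z$, and $M$, and not on the endpoint $N$. Consequently, when I pass from the interval $[0,N]$ to $[0,N+1]$, the term $-2\operatorname{Im}M$ is unchanged and only one extra summand is added to the Green's-identity sum.

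First I would fix $z \in \C^+$ and a symmetric matrix $M$, and subtract the two expressions to obtain
\[ E(M,N+1) - E(M,N) = 2\operatorname{Im}z\, F(N+1,z)^* F(N+1,z). \]
Since $z\in\C^+$ gives $\operatorname{Im}z>0$, and since $F(N+1,z)^*F(N+1,z)$ is Hermitian positive semidefinite for every choice of $M$, the right-hand side is a positive semidefinite matrix. Here I should note that $E(M,N)$ is genuinely Hermitian (this follows from $\mathbb J^* = -\mathbb J$ applied to $E(M,N) = -i\,\mathcal W^*\mathbb J\,\mathcal W$), so that the inequalities $E(M,N)\le 0$ defining the disks are to be read in the sense of negative semidefiniteness, and the comparison above is a comparison in the Loewner order.

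Next I would invoke the elementary fact that adding a positive semidefinite matrix preserves the order: from the displayed identity we get $E(M,N) = E(M,N+1) - 2\operatorname{Im}z\,F(N+1,z)^*F(N+1,z) \le E(M,N+1)$. Therefore, if $M \in \mathcal D(N+1,z)$, i.e.\ $E(M,N+1)\le 0$, then a fortiori $E(M,N)\le E(M,N+1)\le 0$, so $M\in\mathcal D(N,z)$. This yields the desired inclusion $\mathcal D(N+1,z)\subset\mathcal D(N,z)$.

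I do not expect a serious obstacle in this argument; the content is essentially a monotonicity statement for a growing sum of positive semidefinite terms. The only point requiring care is bookkeeping of the matrix (semidefinite) inequalities rather than scalar ones: one must confirm that $E$ is Hermitian so that ``$\le 0$'' is meaningful, and that subtracting a positive semidefinite matrix cannot turn a negative semidefinite matrix into one that is not. Both are standard, so the proof is short once equation (\ref{wd1}) is in hand.
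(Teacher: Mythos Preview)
Your proposal is correct and follows essentially the same approach as the paper: both use formula (\ref{wd1}) to compare $E(M,N)$ with $E(M,N+1)$ via the extra nonnegative term $2\operatorname{Im}z\,F(N+1,z)^*F(N+1,z)$, concluding $E(M,N)\le E(M,N+1)\le 0$ whenever $M\in\mathcal D(N+1,z)$. Your additional remarks about Hermiticity of $E$ and the Loewner order merely make explicit what the paper takes for granted.
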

\begin{proof} Let $M\in \mathcal D(N+1,z). $ From (\ref{wd1}) we have
	\begin{align*} E(M, N) & = -2\operatorname{Im} M + 2\operatorname{Im}z \sum_{j=0}^N F(j,z)^* F(j,z) \\ & \leq -2\operatorname{Im} M + 2\operatorname{Im}z \sum_{j=0}^{N+1} F(j,z)^* F(j,z) \\ & = E(M, N+1) \leq 0 .\end{align*} This shows that $M \in \mathcal D(N,z).$ Hence the result. \end{proof}
From above we have,
\begin{align*} E(M, N)  & = -i [I, M^*] \begin{pmatrix} U(N+1, z)^* & U(N, z)^* \\ V(N+1,z)^* & V(N,z)^* \end{pmatrix} J \begin{pmatrix}  U(N+1, z) & V(N+1,z) \\ U(N, z) & V(N,z) \end{pmatrix}\begin{bmatrix} I\\ M \end{bmatrix} \\ &=  -i [I, M^*]\begin{pmatrix} W_N(\bar{U}, U)& W_N(\bar{U}, V) \\ W_N(\bar{V}, U) & W_N(\bar{V}, V) \end{pmatrix}\begin{bmatrix} I\\ M \end{bmatrix} \\ &= -i [ W_N(\bar{U}, U)+ W_N(\bar{U}, V)M + M^*W_N(\bar{V}, U) + M^* W_N(\bar{V}, V)M   ] \end{align*}

Using $ W_N(\bar{V}, V)^* = -W_N(\bar{V}, V)$ and $  W_N(\bar{V}, U)^* =-W_N(\bar{U}, V) $,  $ E(M, N)$ can be written as

\begin{align} \label{eq 4}E(M, N)   = -i \Big\{ [M-  W_N(\bar{V}, V)^{-1} W_N(\bar{U}, V)^* ]^*  W_N(\bar{V}, V) [ M - W_N(\bar{V}, V)^{-1}W_N(\bar{U}, V)^* ]\\ \nonumber + W_N(\bar{U}, U) + W_N(\bar{U}, V)W_N(\bar{V}, V)^{-1} W_N(\bar{U}, V)^*  \Big\}\end{align}

\begin{lemma}  \label{lemma 6}For $z\in \C^+, W_N(\bar{U}, V)W_N(\bar{V}, V)^{-1} W_N(\bar{U}, V)^* +  W_N(\bar{U}, U) = - W_N(V, \bar{V})^{-1} $ \end{lemma}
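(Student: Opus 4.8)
The plan is to read the identity off as a statement about the Schur complement of a single skew--Hermitian matrix built from the fundamental solution. Introduce the $2d\times 2d$ matrix
\[ \mathcal W \;=\; \begin{pmatrix} U(N+1,z) & V(N+1,z)\\ U(N,z) & V(N,z)\end{pmatrix}\;=\;A(N;z)\,\mathbb J, \]
and set $\Omega := \mathcal W^{*}\mathbb J\,\mathcal W$, where $\mathbb J=\begin{pmatrix}0&I\\-I&0\end{pmatrix}$. A block multiplication identical to the one carried out just above the statement shows that
\[ \Omega=\begin{pmatrix} W_N(\bar U,U) & W_N(\bar U,V)\\ W_N(\bar V,U) & W_N(\bar V,V)\end{pmatrix}. \]
Since $\mathbb J^{*}=-\mathbb J$, we get $\Omega^{*}=-\Omega$, which in block form is exactly the skew--Hermitian relations $W_N(\bar V,V)^{*}=-W_N(\bar V,V)$ and $W_N(\bar V,U)^{*}=-W_N(\bar U,V)$ already used in the text. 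Writing $a,b,c,d$ for the four blocks (so $c=-b^{*}$), the left--hand side of the lemma is $a+b\,d^{-1}b^{*}=a-b\,d^{-1}c$, which I recognize as the Schur complement of the block $d=W_N(\bar V,V)$ in $\Omega$.

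The second step is the standard block--inverse identity. Because $\mathcal W$ is invertible by Lemma~\ref{W} and $\mathbb J$ is invertible, $\Omega$ is invertible, and the $(1,1)$ block of $\Omega^{-1}$ equals the inverse of this Schur complement; that is, $a-b\,d^{-1}c=\bigl[(\Omega^{-1})_{11}\bigr]^{-1}$. Hence it suffices to compute $(\Omega^{-1})_{11}$ and to verify that it equals $-\,W_N(V,\bar V)$, after which the claim follows by inverting.

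For the computation I would invert $\Omega=\mathcal W^{*}\mathbb J\mathcal W$ as $\Omega^{-1}=-\,\mathcal W^{-1}\mathbb J\,(\mathcal W^{-1})^{*}$, using $\mathbb J^{-1}=-\mathbb J$. The reason to invoke Lemma~\ref{W} is that the relation $\mathcal W\,\mathbb J\,\mathcal W^{t}=\mathbb J$ produces $\mathcal W^{-1}$ in closed form,
\[ \mathcal W^{-1}=\begin{pmatrix} V(N,z)^{t} & -V(N+1,z)^{t}\\ -U(N,z)^{t} & U(N+1,z)^{t}\end{pmatrix}, \]
so that only the $V$--rows feed into the $(1,1)$ block. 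Carrying out the two block products and using $\overline{V(n,z)}=V(n,\bar z)$ gives
\[ (\Omega^{-1})_{11}=V(N,z)^{t}V(N+1,\bar z)-V(N+1,z)^{t}V(N,\bar z), \]
which is precisely $-\,W_N(V,\bar V)$ under the paper's Wronskian convention (the first argument transposed, the bar denoting evaluation at $\bar z$). Combining this with the Schur--complement identity yields $a+b\,d^{-1}b^{*}=\bigl[(\Omega^{-1})_{11}\bigr]^{-1}=-\,W_N(V,\bar V)^{-1}$, which is the assertion.

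The main obstacle I anticipate is bookkeeping rather than anything conceptual: I must be scrupulous about transpose versus conjugate transpose and about where $z$ as opposed to $\bar z$ appears, since the two families of Wronskians in play are built with different adjoints---the conjugate--transpose objects $W_N(\bar U,U),\dots$ entering $\Omega$, and the transpose object $W_N(V,\bar V)$ coming out of $\Omega^{-1}$. The safest way to avoid a sign or conjugation error is to confirm the closed form of $\mathcal W^{-1}$ independently by checking $\mathcal W\mathcal W^{-1}=I$ against the four identities encoded in $\mathcal W\,\mathbb J\,\mathcal W^{t}=\mathbb J$, and only then to substitute into $\Omega^{-1}=-\,\mathcal W^{-1}\mathbb J\,(\mathcal W^{-1})^{*}$ and read off the top--left block.
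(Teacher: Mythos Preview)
Your argument is correct, and it takes a genuinely different route from the paper's. The paper sets $\mathbb W=\mathcal W^{*}\mathbb J\mathcal W$ (your $\Omega$) and, using Lemma~\ref{W} together with conjugation, proves the quadratic identity $\mathbb W^{t}\mathbb J\,\mathbb W=\mathbb J$; it then reads off the two block equations $-W_N(V,\bar V)^{*}W_N(\bar U,U)+W_N(U,\bar V)^{*}W_N(\bar V,U)=-I$ and $-W_N(V,\bar V)^{*}W_N(\bar U,V)+W_N(U,\bar V)^{*}W_N(\bar V,V)=0$, and manipulates these by hand to reach the claim. Your proof is more structural: you recognise the left side as the Schur complement $a-bd^{-1}c$ of the lower-right block of $\Omega$, invoke the block-inverse identity $(\Omega^{-1})_{11}=(a-bd^{-1}c)^{-1}$, and then compute $\Omega^{-1}=-\mathcal W^{-1}\mathbb J(\mathcal W^{-1})^{*}$ directly, using the closed form for $\mathcal W^{-1}$ that Lemma~\ref{W} supplies. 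Both proofs rest on the symplectic relation for $\mathcal W$, but the paper channels it through a second symplectic relation for $\Omega$ and then unpacks, whereas you bypass that and go straight to $\Omega^{-1}$. Your route is cleaner and makes the appearance of $W_N(V,\bar V)$ (transpose rather than conjugate transpose, $\bar z$ rather than $z$) transparent, since it drops out of $(\mathcal W^{-1})^{*}$; the paper's route has the virtue of producing the two auxiliary identities above, which may be useful elsewhere. The only point to state explicitly is the invertibility of $d=W_N(\bar V,V)$, which for $z\in\C^{+}$ follows from Green's identity (indeed $iW_N(\bar V,V)>0$, as used later in the paper).
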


\begin{proof} Let $\mathbb W = W^*J W  .$ Notice that $ W^*J W = \begin{pmatrix} W_N(\bar{U}, U)& W_N(\bar{U}, V) \\ W_N(\bar{V}, U) & W_N(\bar{V}, V) \end{pmatrix} .$  From lemma \ref{W} we see that \[ W^*J\bar{W} = J .\] Then  \begin{align*}\mathbb W^tJ  \mathbb W  & =  (W^*JW)^t J (W^*JW) \\ &=  W^tJ^t W^{*^t} J W^* J W   \\ &=- W^tJ W  \\ &= J .\end{align*} 
On the other hand,\begin{align*}\mathbb W^tJ  \mathbb W  & = \begin{pmatrix} W_N(\bar{U}, U)^t & W_N(\bar{V}, U)^t \\  W_N(\bar{U}, V)^t   & W_N(\bar{V}, V)^t \end{pmatrix} J \begin{pmatrix} W_N(\bar{U}, U)& W_N(\bar{U}, V) \\ W_N(\bar{V}, U) & W_N(\bar{V}, V) \end{pmatrix} 
\\ & = \begin{pmatrix} W_N( U, \bar{U})^* & W_N(V, \bar{U})^* \\  W_N(U, \bar{V})^*   &  W_N(V, \bar{V})^* \end{pmatrix}  J \begin{pmatrix} W_N(\bar{U}, U)& W_N(\bar{U}, V) \\ W_N(\bar{V}, U) & W_N(\bar{V}, V) \end{pmatrix}
\end{align*}
By direct computation   we see that 
\begin{align} \label{1} -W_N(V,\bar{V})^*  W_N(\bar{U}, U )+W_N(U,\bar{V})^*W_N(\bar{V}, U)= -I \\ \label{2}- W_N(V, \bar{V})^*  W_N(\bar{U}, V)+  W_N(U, \bar{V})^* W_N(\bar{V}, V)=0 .\end{align}
From  equation (\ref{2}) we have \begin{align*} W_N(\bar{U}, V)^* & = W_N(\bar{V},V)^*W_N(U,\bar{V})W_N(V,\bar{V})^{-1} \\ & = -W_N(\bar{V},V)W_N(U,\bar{V})W_N(V,\bar{V})^{-1}  \end{align*}

Using this we get 
\begin{align*}  & W_N(\bar{U}, V)W_N(\bar{V}, V)^{-1} W_N(\bar{U}, V)^* +  W_N(\bar{U}, U) \hspace{2in} \\ &  = W_N(\bar{U}, V)W_N(\bar{V}, V)^{-1}[-W_N(\bar{V},V)W_N(U,\bar{V})W_N(V,\bar{V})^{-1}  ] + W_N(\bar{U}, U) \\ & = -W_N(\bar{U}, V)W_N(U,\bar{V}) W_N(V,\bar{V})^{-1}  +  W_N(\bar{U}, U) \end{align*} Also from equation (\ref{1}) we get,
\begin{align*}W_N(U,\bar{V})^*W_N(\bar{V}, U) = -I + W_N(V,\bar{V})^*  W_N(\bar{U}, U )\\ -W_N(U,\bar{V})^*W_N(\bar{U}, V)^* = -I + W_N(V,\bar{V})^*  W_N(\bar{U}, U )\\ (W_N(\bar{U}, V)W_N(U,\bar{V}))^*=I - W_N(V,\bar{V})^*  W_N(\bar{U}, U )\\ W_N(\bar{U}, V)W_N(U,\bar{V})=I -  W_N(\bar{U}, U )^*W_N(V,\bar{V})\\ W_N(\bar{U}, V)W_N(U,\bar{V})= I + W_N(\bar{U}, U )W_N(V,\bar{V}) \end{align*}

Then, 
\begin{align*}  & W_N(\bar{U}, V)W_N(\bar{V}, V)^{-1} W_N(\bar{U}, V)^* +  W_N(\bar{U}, U) \hspace{2in} \\& = -(I + W_N(\bar{U}, U )W_N(V,\bar{V})) W_N(V,\bar{V})^{-1}  +  W_N(\bar{U}, U) \\ & = - W_N(V,\bar{V})^{-1} . \end{align*} \end{proof} 

Using 
Using lemma (\ref{lemma 6}) and equation (\ref{eq 4}) we can express $E(M, N)$ in the form 
\begin{align*}  E(M, N)= -i \Big\{ [M-  W_N(\bar{V}, V)^{-1} W_N(\bar{U}, V)^* ]^*  W_N(\bar{V}, V)\\ [ M - W_N(\bar{V}, V)^{-1}W_N(\bar{U}, V)^* ] - W_N(V,\bar{V})^{-1} \Big\}. \end{align*}   Thus it can be expressed as \begin{align*}  E(M, N)= - [(M - C_N(z))^* R(N,z)^{-2}(M -C_N(z)) - R(N,\bar{z})^{2} ]\end{align*} where $ C_N(z) = W_N(\bar{V}, V)^{-1} W_N(\bar{U}, V)^* $ and $ R(N,z)= ( i W_N(\bar{V}, V))^{-1/2} .$
 So the equation of Weyl circle can be written as\[ (M - C_N(z))^* R(N,z)^{-2}(M -C_N(z)) =  R(N, \bar{z} )^{2} \]

\begin{theorem}For all $z \in \C^+, $ $ \lim_{N\rightarrow \infty }R(N,z) $ exists and $ \lim_{N\rightarrow \infty } R(N,z) \geq 0 .$ \end{theorem}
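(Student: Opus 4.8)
The plan is to reduce the statement to a monotone-convergence principle for Hermitian matrices. Since $R(N,z) = (iW_N(\bar{V}, V))^{-1/2}$, its square is $R(N,z)^2 = (iW_N(\bar{V}, V))^{-1}$, so it suffices to show that $\{iW_N(\bar{V}, V)\}_N$ is an increasing sequence of positive definite Hermitian matrices. Then $\{R(N,z)^2\}_N$ is decreasing and bounded, hence converges to a positive semidefinite limit, and continuity of the Hermitian square root transfers convergence to $R(N,z)$ itself.

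First I would evaluate $iW_N(\bar{V}, V)$ explicitly. Applying the Green's identity (\ref{gi}) with $F = G = V$, using $\tau V = zV$, and noting that the initial conditions (\ref{ic}) give $W_0(\bar{V}, V) = V(1,z)^* V(0,z) - V(0,z)^* V(1,z) = 0$, I obtain
\begin{align*} 2i\operatorname{Im}z \sum_{j=0}^N V(j,z)^* V(j,z) = - W_N(\bar{V}, V), \end{align*}
so that
\begin{align*} iW_N(\bar{V}, V) = 2\operatorname{Im}z \sum_{j=0}^N V(j,z)^* V(j,z). \end{align*}
Each summand $V(j,z)^* V(j,z)$ is Hermitian and positive semidefinite, the $j=1$ term equals $I$, and $\operatorname{Im}z > 0$ for $z \in \C^+$; hence $iW_N(\bar{V}, V)$ is Hermitian with $iW_N(\bar{V}, V) \geq 2\operatorname{Im}z\, I > 0$. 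In particular $R(N,z)$ is a well-defined positive definite Hermitian matrix.

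Next I would use monotonicity. Increasing $N$ by one adds the positive semidefinite term $2\operatorname{Im}z\, V(N+1,z)^* V(N+1,z)$, so $iW_N(\bar{V}, V) \leq iW_{N+1}(\bar{V}, V)$ in the Loewner order. Since inversion reverses this order on positive definite matrices, $R(N,z)^2 = (iW_N(\bar{V}, V))^{-1}$ is decreasing, and it is bounded: $0 \leq R(N,z)^2 \leq (2\operatorname{Im}z)^{-1} I$. A decreasing sequence of Hermitian matrices bounded below converges to a Hermitian limit $L \geq 0$ (for instance, each quadratic form $\langle x, R(N,z)^2 x\rangle$ is a bounded decreasing sequence of reals, and polarization recovers the matrix limit). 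Finally, continuity of $X \mapsto X^{1/2}$ on the positive semidefinite cone gives $R(N,z) = (R(N,z)^2)^{1/2} \to L^{1/2} \geq 0$, which is the assertion.

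The Green's-identity computation is routine; the step requiring care is the matrix monotone-convergence argument, for which I would appeal to order-completeness of the cone of Hermitian matrices together with continuity of the Hermitian square root. I expect this to be the main, albeit standard, obstacle. I would also remark that whether the limit $L^{1/2}$ is strictly positive definite or degenerate is exactly the limit-circle versus limit-point alternative for the operator.
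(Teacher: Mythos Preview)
Your proposal is correct and follows essentially the same route as the paper: both use Green's identity with $F=G=V$ to obtain $iW_N(\bar V,V)=2\operatorname{Im}z\sum_{j=0}^N V(j,z)^*V(j,z)$, observe that this is positive definite and nondecreasing in $N$, and conclude that $R(N,z)$ is nonincreasing and hence convergent. Your write-up is more careful than the paper's in spelling out the matrix monotone-convergence step and the continuity of the Hermitian square root, but the underlying argument is identical.
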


\begin{proof} By Green's identity we have \begin{align*} 2 \operatorname{Im}z  \sum_{j=0}^N V(j, z)^* V(j,z) = iW_N(\bar{V}, V) =  R(N,z)^{-2} > 0 . \end{align*} Also $ R(N,z)^{-2}$ is nondecreasing. Thus, $ R(N,z)$ is non increasing and so $ \lim_{N\rightarrow \infty }R(N,z) $ exists. \end{proof}

\begin{theorem}For all $z \in \C^+, $ $ \lim_{N\rightarrow \infty }C_N(z) $ exists.  \end{theorem}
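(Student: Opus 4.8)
The goal is to prove that $\big(C_N(z)\big)_{N}$ is a Cauchy sequence in $\C^{d\times d}$; since this space is complete, the limit then exists. All the tools are in place: the nesting property $\mathcal D(N+1,z)\subset\mathcal D(N,z)$, the fact from the previous theorem that $R(N,z)=(iW_N(\bar V,V))^{-1/2}$ is Hermitian positive, non-increasing in $N$, and convergent (and, by the analogous Green's-identity computation, the companion matrix $R(N,\bar z)^2=iW_N(V,\bar V)^{-1}$ is also Hermitian positive and non-increasing, hence convergent), together with the description of the disk proved just above the statement,
\[ M\in\mathcal D(N,z)\iff \big(M-C_N(z)\big)^* R(N,z)^{-2}\big(M-C_N(z)\big)\le R(N,\bar z)^2 .\]
Rather than trying to control $C_{N+1}(z)-C_N(z)$ in a single matrix norm, I would reduce everything to scalar, entrywise statements by projecting the disk onto one-dimensional bilinear directions.

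First I would record a support-function description of $\mathcal D(N,z)$. Writing $M-C_N(z)=R(N,z)\,W\,R(N,\bar z)$ with $W\in\C^{d\times d}$, the disk inequality is equivalent to $W^*W\le I$, i.e.\ $\|W\|\le1$ in the operator norm. Hence, for any fixed vectors $\xi,\eta\in\C^d$,
\[ \big\{\,\xi^*\big(M-C_N(z)\big)\eta : M\in\mathcal D(N,z)\,\big\}=\big\{\,(R(N,z)\xi)^* W\,(R(N,\bar z)\eta) : \|W\|\le1\,\big\} \]
is exactly the closed complex disk centred at $0$ of radius $\rho_N(\xi,\eta):=\|R(N,z)\xi\|\,\|R(N,\bar z)\eta\|$. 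Because $R(N,z)^2$ and $R(N,\bar z)^2$ are non-increasing positive matrices, each of $\|R(N,z)\xi\|^2=\xi^*R(N,z)^2\xi$ and $\|R(N,\bar z)\eta\|^2=\eta^*R(N,\bar z)^2\eta$ is non-increasing, so $\rho_N(\xi,\eta)$ is non-increasing in $N$ for every fixed $\xi,\eta$.

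Next I would feed the nesting into these planar projections. Since $\mathcal D(N+1,z)\subset\mathcal D(N,z)$, the image of $\mathcal D(N+1,z)$ under $M\mapsto\xi^*(M-C_N(z))\eta$, which is the disk of radius $\rho_{N+1}(\xi,\eta)$ centred at $w_N:=\xi^*\big(C_{N+1}(z)-C_N(z)\big)\eta$, must sit inside the disk of radius $\rho_N(\xi,\eta)$ centred at $0$. Two nested complex disks satisfy $|w_N|+\rho_{N+1}(\xi,\eta)\le\rho_N(\xi,\eta)$, so
\[ \big|\xi^*\big(C_{N+1}(z)-C_N(z)\big)\eta\big|\le\rho_N(\xi,\eta)-\rho_{N+1}(\xi,\eta). \]
Now fix $\xi,\eta$ and sum: the right-hand side telescopes because $\rho_N(\xi,\eta)$ is monotone, giving $\sum_{N}\big|\xi^*(C_{N+1}(z)-C_N(z))\eta\big|\le\rho_0(\xi,\eta)-\lim_N\rho_N(\xi,\eta)<\infty$. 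Thus the scalar sequence $\big(\xi^*C_N(z)\eta\big)_N$ is Cauchy. Taking $\xi=e_i,\eta=e_j$ shows every entry of $C_N(z)$ converges, and since there are only $d^2$ entries, $C_N(z)$ converges in $\C^{d\times d}$.

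The main obstacle is the second step: establishing the clean support-function/planar-disk picture rigorously. One must check that $R(N,\bar z)$ is genuinely Hermitian, positive, and non-increasing (so that the substitution $M-C_N=R(N,z)WR(N,\bar z)$ is a bijection onto $\mathcal D(N,z)$ with the inequality becoming $W^*W\le I$), and that $\{a^*Wb:\|W\|\le1\}$ is precisely the disk of radius $\|a\|\,\|b\|$; this is where the matrix nature of the problem really enters, and it is what lets the telescoping beat the failure of the naive norm estimate $\|C_{N+1}-C_N\|\le\|R(N,z)\|\,\|R(N,\bar z)\|$ to be summable. I would prefer this route over the direct approach through the explicit formula $C_N(z)=\tfrac{i}{2\operatorname{Im}z}\,S_N^{-1}-S_N^{-1}T_N$ (with $S_N=\sum_{j=0}^N V^*V$ and $T_N=\sum_{j=0}^N V^*U$): there the first summand equals $iR(N,z)^2$ and converges immediately by the previous theorem, but the remaining term $S_N^{-1}T_N$ need not converge entrywise by monotonicity alone and appears to require the same nesting input, only less transparently.
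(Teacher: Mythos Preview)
Your argument is correct and takes a genuinely different route from the paper's. The paper parametrises points of the Weyl circle as $M=C_N(z)+R(N,z)\,U\,R(N,\bar z)$ with $U$ unitary, picks some $M$ on the $(N{+}1)$-circle (hence in the $N$-disk by nesting), writes it in both parametrisations, and bounds $\|C_{N+1}(z)-C_N(z)\|$ in operator norm by a triangle-inequality sum involving $\|R(N{+}1,\cdot)-R(N,\cdot)\|$ together with the bounded unitary/contractive factors $U_N,U_{N+1}$; convergence of $R(N,z)$ and $R(N,\bar z)$ from the preceding theorem is then invoked for the Cauchy property. Your approach instead slices the matrix disks by the scalar functionals $M\mapsto\xi^*M\eta$, turning the matrix nesting into a nesting of planar disks and invoking the elementary fact that $D(c',\rho')\subset D(c,\rho)$ forces $|c'-c|\le\rho-\rho'$, which telescopes. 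What this buys you is a strictly stronger conclusion, namely absolute entrywise summability $\sum_N\big|\xi^*(C_{N+1}(z)-C_N(z))\eta\big|\le\rho_0(\xi,\eta)<\infty$, and an argument that never needs to track the unitary parameters $U_N$ at all, so it is transparently insensitive to whether the limiting radius $R_0(z)$ vanishes. One small simplification of your write-up: you do not need to verify separately that $R(N,\bar z)^2$ is non-increasing, since the containment of the projected planar disks already yields $\rho_{N+1}(\xi,\eta)\le\rho_N(\xi,\eta)$ directly.
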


\begin{proof}For any $M \in C(N, Z) $  we have
\[ (M - C_N(z))^* R(N,z)^{-2}(M -C_N(z)) =  R(N, \bar{z} )^{2} \] which follows that 
 \[ \Big(R(N, z )^{-1} (M - C_N(z)) R(N,\bar{z})^{-1}\Big)^* \Big( R(N,z)^{-1}(M -C_N(z))R(N, \bar{z} )^{-1}\Big) = I  \]
 Suppose $ U = \Big(R(N, z )^{-1} (M - C_N(z)) R(N, \bar{z})^{-1}\Big) $ so that $U^*U = I$ that is $U$ is unitary. Also, \[M = C_N(z) + R(N, z )U  R(N, \bar{z}) \]
 Suppose $ M \in C_{N+1}(z) \subset C_N(z) $ then we have \\
 $M = C_{N+1}(z) + R(N+1, z )U_{N+1}  R(N+1, \bar{z}) $ and  
 $M = C_N(z) + R(N, z )U_N  R(N, \bar{z}) $ Equating and taking operator norm on both sides we get \begin{align*} \| C_{N+1}(z)- C_N(z)\| & = \|R(N+1, z )U_{N+1}  R(N+1, \bar{z})-  R(N, z )U_N  R(N, \bar{z}) \|\\ & \leq  \|R(N+1, z )U_{N+1}  R(N+1, \bar{z}) - R(N, z )U_{N+1}  R(N+1, \bar{z})\| \\ & +  \|R(N, z )U_{N+1}  R(N+1, \bar{z}) -  R(N, z )U_{N}  R(N+1, \bar{z})\| \\ & + \|R(N, z )U_{N}  R(N+1, \bar{z}) -  R(N, z )U_N  R(N, \bar{z}) \| \\ & \leq  \|R(N+1, z ) - R(N, z ) \| \|U_{N+1}\| \|  R(N+1, \bar{z})\| \\ & +  \|R(N, z )\|U_{N+1}  -  U_{N} \| \| R(N+1, \bar{z})\| \\ & + \|R(N, z )\| \|U_{N}\| \|  R(N+1, \bar{z}) -  R(N, z ) \|\end{align*}
  This shows that $C_N(z)$ is a Cauchy sequence, hence converges.
\end{proof}
 
Let $C_0(z) = \lim_{N\rightarrow \infty} C_N(z)$ and $ R_0(z)= \lim_{N\rightarrow \infty} R(N, z)$.\\ Define $ D_0(z) = \{M\in \C^{d\times d}: (M-C_0(z))^*R_0(z)^{-2}(M-C_0(z)) \leq R_0(\bar{z})^2$ then \begin{align*}D_0(z) = \cap_{N \geq 1} D(N, z). \end{align*}
 
 \begin{theorem} Let $z\in \C^+$ and $M \in \C^{d\times d}.$ Then for $F(N,z) = U(N,z)+V(N,z)M$ we have\\
 	(1) $M$ is inside $\mathcal D_0(z)$ if and only if \[ \sum_{N=1}^{\infty} F(N,z)^*F(N,z) \leq \frac{\operatorname{Im} M}{\operatorname{Im} z}\]
 	(2)  $M$ is on the boundary of $\mathcal D_0(z)$ if and only if \[ \sum_{N=1}^{\infty} F(N,z)^*F(N,z) = \frac{\operatorname{Im} M}{\operatorname{Im} z}\] \end{theorem}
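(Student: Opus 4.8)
The plan is to reduce the statement to the explicit formula for $E(M,N)$ recorded in (\ref{wd1}) together with the fact that $\mathcal D_0(z)=\cap_{N\ge 1}\mathcal D(N,z)$. First I would unwind the membership $M\in\mathcal D_0(z)$: by definition this means $M$ lies in every disk $\mathcal D(N,z)$, i.e. $E(M,N)\le 0$ for all $N$. Substituting the closed form
\[ E(M,N) = -2\operatorname{Im}M + 2\operatorname{Im}z\sum_{j=0}^{N}F(j,z)^*F(j,z) \]
and dividing by the positive scalar $2\operatorname{Im}z$ (legitimate since $z\in\C^+$), the condition $E(M,N)\le 0$ becomes exactly the finite-sum inequality
\[ \sum_{j=0}^{N}F(j,z)^*F(j,z)\ \le\ \frac{\operatorname{Im}M}{\operatorname{Im}z} \]
in the Loewner order. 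This is the heart of part (1) at the finite level.

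Next I would pass to the limit $N\to\infty$. Because each summand $F(j,z)^*F(j,z)$ is positive semidefinite, the partial sums form a monotone nondecreasing sequence of Hermitian matrices. If they are bounded above by the fixed matrix $\operatorname{Im}M/\operatorname{Im}z$, then the series converges and its sum inherits the same bound; conversely, if the full series satisfies the bound, so does every partial sum. Hence $M\in\mathcal D_0(z)$ if and only if $\sum_{N}F(N,z)^*F(N,z)\le \operatorname{Im}M/\operatorname{Im}z$, which is (1).

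For part (2) I would identify the boundary of $\mathcal D_0(z)$ with the equality case. Using $C_N(z)\to C_0(z)$ and $R(N,z)\to R_0(z)$, established in the two preceding theorems, I would take $N\to\infty$ in the circle form $E(M,N) = -[(M-C_N)^*R(N,z)^{-2}(M-C_N)-R(N,\bar z)^2]$ to obtain the limiting defect $E_\infty(M) = -[(M-C_0)^*R_0(z)^{-2}(M-C_0)-R_0(\bar z)^2]$. The boundary of $\mathcal D_0(z)$ is precisely the limit circle $C_0(z)$, where this quadratic form vanishes, i.e. $E_\infty(M)=0$; by the same division by $2\operatorname{Im}z$ as in part (1) this reads $\sum_N F(N,z)^*F(N,z)=\operatorname{Im}M/\operatorname{Im}z$.

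The main obstacle I anticipate is the boundary statement (2): one must take care that ``boundary of $\mathcal D_0(z)$'' refers to the equality set of the defining matrix inequality rather than to a topological boundary in some ambient norm, and that the limits $C_0(z),R_0(z)$ interact correctly with the Loewner order, which is only a partial order so that ``inside'' and ``boundary'' are not scalar complements. Justifying the interchange of $\lim_{N\to\infty}$ with the quadratic form, and confirming that the equality set is genuinely attained and matches the limit circle, is where the real work lies; part (1) is essentially a direct unwinding of the definition of $\mathcal D_0(z)$ combined with monotone convergence of the positive-semidefinite partial sums.
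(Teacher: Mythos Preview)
Your approach is correct and, for part (1), essentially identical to the paper's: unwind $\mathcal D_0(z)=\bigcap_N \mathcal D(N,z)$, invoke the formula (\ref{wd1}) for $E(M,N)$, divide by $2\operatorname{Im}z$, and pass to the limit using monotonicity of the positive-semidefinite partial sums.

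For part (2) you actually do more than the paper does. The paper disposes of (2) with a single sentence (``Similar explanation also proves (2)''), whereas you route the argument through the convergence $C_N(z)\to C_0(z)$, $R(N,z)\to R_0(z)$ and the quadratic-form description of the limit circle. Your concerns about interpreting ``boundary'' in the Loewner order and about interchanging the limit with the quadratic form are legitimate and are not addressed in the paper; in effect you have identified the gap that the paper's ``similar explanation'' glosses over. Your proposed fix---passing to the limiting defect $E_\infty(M)$ via the established convergence of centers and radii---is a reasonable way to make (2) rigorous, provided one checks that $R_0(z)$ is invertible (or handles the degenerate case separately), which neither you nor the paper does explicitly.
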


 \begin{proof} 
 	Let $M \in \mathcal D_0(z). $ Then $M \in D(N,z)$ for all $N.$ So from (\ref{wd1}) we have 	\begin{align*} E(M, N) & = -2\operatorname{Im} M + 2\operatorname{Im}z \sum_{j=0}^N F(j,z)^* F(j,z) \leq 0\end{align*}
 which follows that \begin{align*} \sum_{j=0}^N F(j,z)^* F(j,z) \leq \frac{\operatorname{Im} M }{\operatorname{Im}z} .\end{align*} Taking limit as $N\rightarrow \infty$ we get  \[ \sum_{N=1}^{\infty} F(N,z)^*F(N,z) \leq \frac{\operatorname{Im} M}{\operatorname{Im} z}.\]	
 Conversely, for any $N$ we have, \begin{align*} \sum_{j=1}^N F(j,z)^* F(j,z) \leq \sum_{j=1}^{\infty} F(j,z)^* F(j,z)  \leq \frac{\operatorname{Im} M }{\operatorname{Im}z} .\end{align*} So $ E(M, N) \leq 0 $ for all $N$ and hence $M \in \mathcal D_0(z).$ Similar explanation also proves (2).	\end{proof}

 	\textbf{Acknowledgement:} The author would like to thank the Department of Mathematics and the Office of Sponsored Research, Embry-Riddle Aeronautical University for support.




	\providecommand{\bysame}{\leavevmode\hbox to3em{\hrulefill}\thinspace}
	\providecommand{\MR}{\relax\ifhmode\unskip\space\fi MR }
	\providecommand{\MRhref}[2]{%
		\href{http://www.ams.org/mathscinet-getitem?mr=#1}{#2}
	}
	\providecommand{\href}[2]{#2}

	\end{document}